\theoremstyle{definition}
\newtheorem*{proposition}{Proposition}
\newtheorem*{consequence}{Consequence}
\begin{document}
\title{Percolation and jamming of random sequential adsorption samples of large linear \texorpdfstring{$k$-mers}{k-mers} on a square lattice}

\author{M. G. Slutskii}
\email{mace\_window@mail.ru}
\affiliation{National Research University Higher School of Economics, 101000 Moscow, Russia}

\author{L. Yu. Barash}
\email[Corresponding author: ]{barash@itp.ac.ru}
\affiliation{National Research University Higher School of Economics, 101000 Moscow, Russia}
\affiliation{Landau Institute for Theoretical Physics, 142432 Chernogolovka, Russia}

\author{Yu. Yu. Tarasevich}
\email{tarasevich@asu.edu.ru}
\affiliation{Astrakhan State University, 414056 Astrakhan, Russia}

\begin{abstract}
The behavior of the percolation threshold and the jamming coverage for isotropic random sequential adsorption samples has been studied by means of numerical simulations. A parallel algorithm that is very efficient in terms of its speed and memory usage has been developed and applied to the model involving large linear $k$-mers on a square lattice with periodic boundary conditions. We have obtained the percolation thresholds and jamming concentrations for lengths of $k$-mers up to $2^{17}$. New large $k$ regime of the percolation threshold behavior has been identified. The structure of the percolating and jamming states has been investigated. The theorem of G.~Kondrat, Z.~Koza, and P.~Brzeski [Phys. Rev. E 96, 022154 (2017)] has been generalized to the case of periodic boundary conditions. We have proved that any cluster at jamming is percolating cluster and that percolation occurs before jamming.

%We have studied the behavior of the percolation threshold and the jamming coverage for isotropic random sequential adsorption samples by means of numerical simulations. A model involving large linear $k$-mers on a square lattice with periodic boundary conditions is considered. We have presented a parallel algorithm that is very efficient in terms of its speed and memory usage. We have obtained the percolation thresholds and jamming concentrations for lengths of $k$-mers up to $2^{17}$. New large $k$ regime of the behavior of the percolation threshold is identified. We investigate the structure of the percolating and jamming states. We have generalized the theorem of G.~Kondrat, Z.~Koza, and P.~Brzeski~[Phys. Rev. E 96, 022154 (2017)] for the case of periodic boundary conditions and have proved that, in thermodynamic limit, percolation always occurs before jamming.
\end{abstract}

\maketitle

\section{Introduction\label{sec:intro}}

Random sequential adsorption (RSA) is a standard method of modeling the adsorption of particles
at a liquid-solid interface~\cite{Talbot2000CSA}. Such particles can be, for example, polymers, biomolecules or nanotubes, their aspect ratios typically being in the orders of $10^3$--$10^4$. During deposition, a spanning cluster may occur, i.e., a set of neighboring particles which span between the opposite borders of the substrate. Additional adsorption of particles leads the system into a jammed state where no additional particle can be deposited due to the absence of appropriate holes~\cite{Evans1993RMP}. The resulting concentration of particles is known as the jamming coverage, $p_j$. The formation of spanning clusters is associated with the percolation phase transition, and the corresponding concentration of particles is known as the percolation threshold, $p_c$. Over the last several decades, percolation theory has been employed to study the properties of disordered media~\cite{Stauffer,Sahimi1994}. The properties of the media are considerably different below and above the percolation threshold. Some models of  percolation theory are simple, yet productive, models of phase transitions.

One of the possible ways to simulate the adsorption of such large elongated particles is based on the use of a discrete space, e.g., a square lattice and linear $k$-mers (also denoted as needles, linear segments, stiff-chains, rods, or sticks), i.e., rectangular ``molecules'', which occupy $k$ successive faces of the lattice.

Becklehimer and Pandey reported that, for $k \le 20$, the percolation threshold depends strongly on the value of $k$ with a power-law
\begin{equation}\label{eq:Becklehimer}
  p_c(k) \propto k^{-1/2},
\end{equation}
whereas the jamming coverage decreases~\cite{Becklehimer1992}.

The percolation of $k$-mers up to $k = 40$ has been studied by Leroyer and Pommiers~\cite{Leroyer1994PRB}. The investigators found that the percolation cluster is built of regions (stacks) of the same orientation, the typical size of which is $k$. Non-monotonic dependence of the percolation threshold on $k$ was found. For small values of $k$, the percolation threshold decreases as \begin{equation}\label{eq:Leroyer}
p_c(k) = k^{-1} + \mathrm{const}
\end{equation}
but it increases for larger values of $k$. Minimal values of the percolation threshold correspond to $k \approx 15$. The authors argued this behavior in terms of the stack structure of the percolation cluster. The authors suggested that the percolation threshold can be reached for any (finite) size of the $k$-mers in contrast to the simulations with rectangles and squares where jamming saturation occurs before percolation~\cite{Nakamura1987PRA}. Moreover, the structure of the percolation cluster was studied using the local order parameter
\begin{equation*}\label{eq:S}
  S = \left\langle\frac{\left|N_y - N_x\right|}{N_y + N_x}\right\rangle,
\end{equation*}
where $N_x$ ($N_y$) is the number of sites covered by the horizontal (vertical) needles in a given box.  Note that the stack structure of the jamming state in terms of the order parameter was studied in~\cite{Tarasevich2018MC}, where it was found that the characteristic size of each stack is of the order of $k\times k$.

It is evident that Eqs.~\eqref{eq:Becklehimer} and \eqref{eq:Leroyer} describe quite different behaviors. This is due to the use of different kinds of RSA, viz., Leroyer and Pommiers used conventional RSA, while Becklehimer and Pandey applied a so-called ``end-on'' mechanism of RSA~\cite{Evans1993RMP}.

The percolation and jamming of needles and squares on a square lattice have been investigated by Vandewalle et al.~\cite{Vandewalle2000epjb}. The ratio $p_c/p_j$ was found to be a constant of $0.62 \pm 0.01$ up to $k=20$.
The authors suggested that both quantities scale as
\begin{equation}\label{eq:Vandewalle}
  p(k) = C\left( 1 - \gamma\left( \frac{k - 1}{k}\right)^2 \right),
\end{equation}
where $\gamma  = 0.31 \pm 0.01$ (except for the case $k=1$). The presented results evidence that the percolation threshold is significantly smaller than the jamming coverage when $k \to \infty$ if Eq.~\eqref{eq:Vandewalle} is valid for any values of $k$. Moreover, the existence of stacks, not only at the percolation threshold but also in the jammed state was reported.

The jamming and percolation of needles with lengths up to $k=2000$ have been investigated by Kondrat and P\c{e}kalski~\cite{Kondrat2001PRE}. For $k>15$, an increase in the percolation threshold was found.
 For $15 \le k \le 45$, the fitting formula
\begin{equation}\label{eq:Kondrat}
  p_c/p_j \propto 0.50 + 0.13 \log_{10} k
\end{equation}
was proposed. A monotonic increase in $p_c /p_j$ holds over a wide range of values of $k$ even up to $k =2000$. The dependence of the jamming concentration on the value of $k$ was fitted by
\begin{equation}\label{eq:jamming}
  p_j(k) = p_j(\infty) +  a / k^\alpha,
\end{equation}
where $p_j(\infty) = 0.66 \pm 0.01$, $a = 0.44$, $\alpha = 0.77$. The parameters of the fitting formula were refined in Ref.~\cite{Lebovka2011PRE} as $p_j(\infty) = 0.655 \pm 0.009$, $a = 0.416$, $\alpha = 0.720 \pm 0.007$ based on simulations and scaling analysis up to $k=256$. It was also found that
\begin{equation}\label{eq:Tarasevichpc}
  p_c(k) = a_0 / k^{\alpha_0} + b \log_{10} k + c,
\end{equation}
where $a_0 = 0.36 \pm 0.02$, $\alpha_0 =0.81 \pm 0.12$, $b = 0.08 \pm 0.01$, and $c =0.33 \pm 0.02$~\cite{Tarasevich2012PRE}. For the ratio $p_c/p_j$, a fitting formula was proposed
\begin{equation}\label{eq:Tarasevichpcpj}
p_c/p_j = B \log_{10} k + C,
\end{equation}
where $B= 0.119 \pm 0.003$ and $C = 0.513 \pm 0.006$~\cite{Tarasevich2012PRE}.

The monotonic behavior of the percolation threshold as a function of $k$ for $k \le 15$ was reported and the following fitting formula was proposed
\begin{equation}\label{eq:Cornette}
  p_c(k) = p_c^\ast + \Omega \exp\left( -\frac{k}{\kappa}\right),
\end{equation}
$p_c^\ast = 0.461 \pm 0.001$,  $\Omega = 0.197 \pm 0.02$, and $\kappa = 2.775 \pm 0.02$~\cite{Cornette2003epjb}.

Naturally, jamming and percolation produced by means of the RSA of $k$-mers has been studied on other kinds of substrates, e.g., on triangular lattices~\cite{Budinski-Petkovic2012,Perino2017SSMTE}. A non-monotonic size dependence of the percolation threshold and decrease of the jamming coverage have also been observed on such  triangular lattices~\cite{Perino2017SSMTE}.

In a recent paper~\cite{Kondrat2017PRE}, a proof was presented that there will be a percolating cluster in any jammed configuration of nonoverlapping fixed-length, horizontal, or vertical needles on a square lattice. The theorem disproves the recent conjecture~\cite{Tarasevich2012PRE,Tarasevich2015PRE,Centres2015JSM} that in the random sequential adsorption of such needles on a square lattice, percolation does not occur if the needles are longer than some threshold value $k^*$, estimated to be of the order of several thousand.

We note that percolation is absent for a jammed system in a continuous model~\cite{Ziff1990JPA,Viot1992PA}.
Hence, these results imply that the qualitative behavior of continuous percolation cannot be approximated using a finite lattice model, regardless of the lattice size.

The theorem ensures that the ratio of the percolation threshold and jamming coverage $p_c/p_j$ is well defined for all needle lengths, but it does not predict the asymptotic behavior of this quantity for $k \to \infty$, which has not been reported up to now to the best of our knowledge. The theorem was proved only for the case of rigid boundaries,
but it is also interesting to consider periodic boundary conditions for this problem. The aim of this work is the direct numerical verification of the theorem and a determination of the character of the dependence of the percolation thresholds on $k$ for large values of $k$.

For this purpose, we developed a parallel algorithm that is very efficient in terms of speed and memory usage (Section~\ref{sec:methods}). This has allowed us to obtain the percolation and jamming concentrations for lengths of $k$-mer up to $2^{17}$ and to demonstrate the new large $k$ regime of their behavior (Section~\ref{sec:results}). The percolation threshold and the jamming coverage show very different behavior for $k \gtrsim 500$ compared to the behaviors for $k \le 512$ (Section~\ref{sec:conc}). We also generalize the results of~\cite{Kondrat2017PRE} for the case of periodic boundary conditions in Appendix~\ref{sec:appendix}.

\section{Methods\label{sec:methods}}
\subsection{Details of simulations}
Deposition of the $k$-mers onto a square lattice $L \times L$ with periodic boundary conditions, i.e., onto a torus, was performed using the RSA mechanism. Calculations were performed with $L=100k$ for $k\le 2^{14}$ and $L=1\,638\,400$ for $k>2^{14}$. For $k\le 2^{14}$ a single calculation was performed, while for $k>2^{14}$ two independent runs were carried out.

To estimate statistical uncertainty, we performed additional test for particular values of $k$ and $L$ with up to $200$ independent runs.

Scaling analysis was performed for some particular cases to estimate the finite-size effect, since the percolation threshold is expected to vary as~\cite{Leroyer1994PRB}
\begin{equation}\label{eq:pcscaling}
p_c(\infty) - p_c(L) \propto \left(\frac{L}{k}\right)^{-1/\nu},
\end{equation}
where the critical exponent in 2D is $\nu = 4/3$~\cite{Stauffer}, while the jamming coverage is expected to vary as~\cite{Vagberg2011PRE,Krapivsky2010RSA}
\begin{equation}\label{eq:pjscaling}
|p_j(\infty) - p_j(L)| \propto L^{-1}.
\end{equation}

The cluster size distribution was also studied. At the percolation threshold, the average number (per site) of clusters containing $s$ sites each is expected to obey the following relation
\begin{equation}\label{eq:CSD}
  n_s \propto s^{-\tau}
\end{equation}
for large values of $s$. Here, $\tau$ is the Fisher exponent, $\tau = 187/91$ for a percolation in 2D~\cite{Fisher1967PPF,Stauffer}.

To characterize the connectedness of the system under consideration, the relative number of interspecific contacts was used
\begin{equation}\label{eq:nxy}
  n_{xy}^\ast = \frac{n_{xy}}{n_x + n_y +n_{xy}},
\end{equation}
where $n_x$ ($n_y$) is the number of contacts between sites belonging to the horizontal (vertical) needles, and  $n_{xy}$ is the number of contacts between needles oriented in mutually perpendicular directions. In fact, this quantity shows how strong the stacks are connected with each other.  A configuration with strongly connected stacks implies the existence of percolation, while poorly connected stacks imply that percolation may not occur~\cite{Tarasevich2012PRE}.

\subsection{Deposition of \texorpdfstring{$k$-mers}{k-mers} onto the lattice}
Straightforward realization of the RSA has two major drawbacks in the final stage of the algorithm, viz.,
\begin{enumerate}
  \item due to the lack of space, many attempts will fail prior to each successful placement of an object;
  \item the program would not know whether jamming had occurred.
\end{enumerate}
Both these problems can be solved by generating two lists of viable sites (one for each orientation)
when the density is large enough, e.g., after percolation has occurred~\cite{Nord1991JSCS,Brosilow1991PRA,Evans1993RMP}. ``Viable'' sites are those, where a $k$-mer can be placed, considering the orientation. After the lists are filled, the coordinates should be taken from a random position in these arrays. Regardless of these coordinates still being viable or not, the chosen element should always be removed from the list.

\subsection{Cluster labeling}
Percolation transition occurs when a cluster wraps around the torus. Such a cluster is defined as a wrapping cluster. In order to check whether percolation is present, one should label all the clusters of $k$-mers. This task can be done using the Union-Find algorithm~\cite{Newman2000PRL,Newman2001PRE}. The idea is to label each occupied site of the lattice with the cluster's number (ID). The algorithm itself consists of two parts.
\begin{description}
\item[Find] when a new $k$-mer is added to the lattice we must find which clusters touch this $k$-mer.
\item[Union] if a $k$-mer connects different clusters, they should be merged into a single cluster.
\end{description}
The union part can be achieved by reassigning the cluster ID for all the sites of one of the clusters. It is reasonable always to choose the smaller clusters when such a need to change their ID is required, meaning  it is useful to have previously stored the size of every cluster.

\subsection{Checking for percolation}

Due to the periodic boundary conditions, percolation occurs when a cluster wraps around the torus. We are looking for a cluster which wraps around the torus in one or both directions. Finding a wrapping cluster can be performed using the Machta algorithm~\cite{Machta1996PRE,Newman2001PRE}. This requires choosing one site in each cluster (the root) and, for each site in this cluster, storing its $x$ and $y$ displacements from the root. These displacements can be negative and they should be updated each time different clusters are merged.

It is important to notice that a wrapping percolation can only occur if the new $k$-mer connects two parts of the same cluster (to wrap around the lattice). Each time this happens the algorithm should calculate $\left|x_1-x_2\right|+\left|y_1-y_2\right|$, where $x_1,y_1$ are the displacements in the first site and
$x_2,y_2$ are those in the second. If this expression is not greater than $2k$, then no percolation is present.
Otherwise, percolation has occurred.

\subsection{Memory issues}

In order for the algorithm to work properly, at least three integer numbers (12 bytes) need to be stored for each site, viz., the cluster ID and the displacements. Keeping in mind that the lattice can contain about $10^{12}$ sites, this leads to enormous memory consumption, hence, some serious optimization is required for large values of $k$. The method is used only for $k \ge 96$.

The idea is to store the cluster ID and the displacements for the whole $k$-mer rather than for each site.
Each site should store four bits:
\begin{description}
\item[1\textsuperscript{st} bit]  whether the site is occupied or not,
\item[2\textsuperscript{nd} bit]  orientation of the $k$-mer the site belongs to,
\item[3\textsuperscript{rd} bit]  whether the site is the first one (parent site) in the $k$-mer,
\item[4\textsuperscript{th} bit]  information about the $k$-mer.
\end{description}

Using the first three bits it is easy to find the parent of the $k$-mer. Since the $k$-mer is long enough ($k \ge 96$) it can be divided into several parts, as shown in Fig.~\ref{fig:memory}. While writing information into the sites the appropriate number should be decomposed into 32 bits and put into the fourth bit in each of the 32 corresponding sites. While reading information this same procedure has to be performed in reverse.
\begin{figure}[!htb]
\includegraphics[width=\columnwidth]{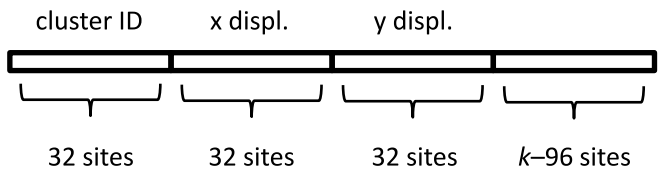}
\caption{Memory organization for storing the cluster ID and displacements.
Each site of every $k$-mer corresponds to a single bit of information.\label{fig:memory}}
\end{figure}

This method of memory usage requires $0.5$ byte for each site, which is $24$ times less than the original one. The method is used only for $k\ge 96$. Figure~\ref{fig:speedup}(a) shows that there is additional speedup due to the memory compression.
\begin{figure}[hbt]
\includegraphics[width=\columnwidth]{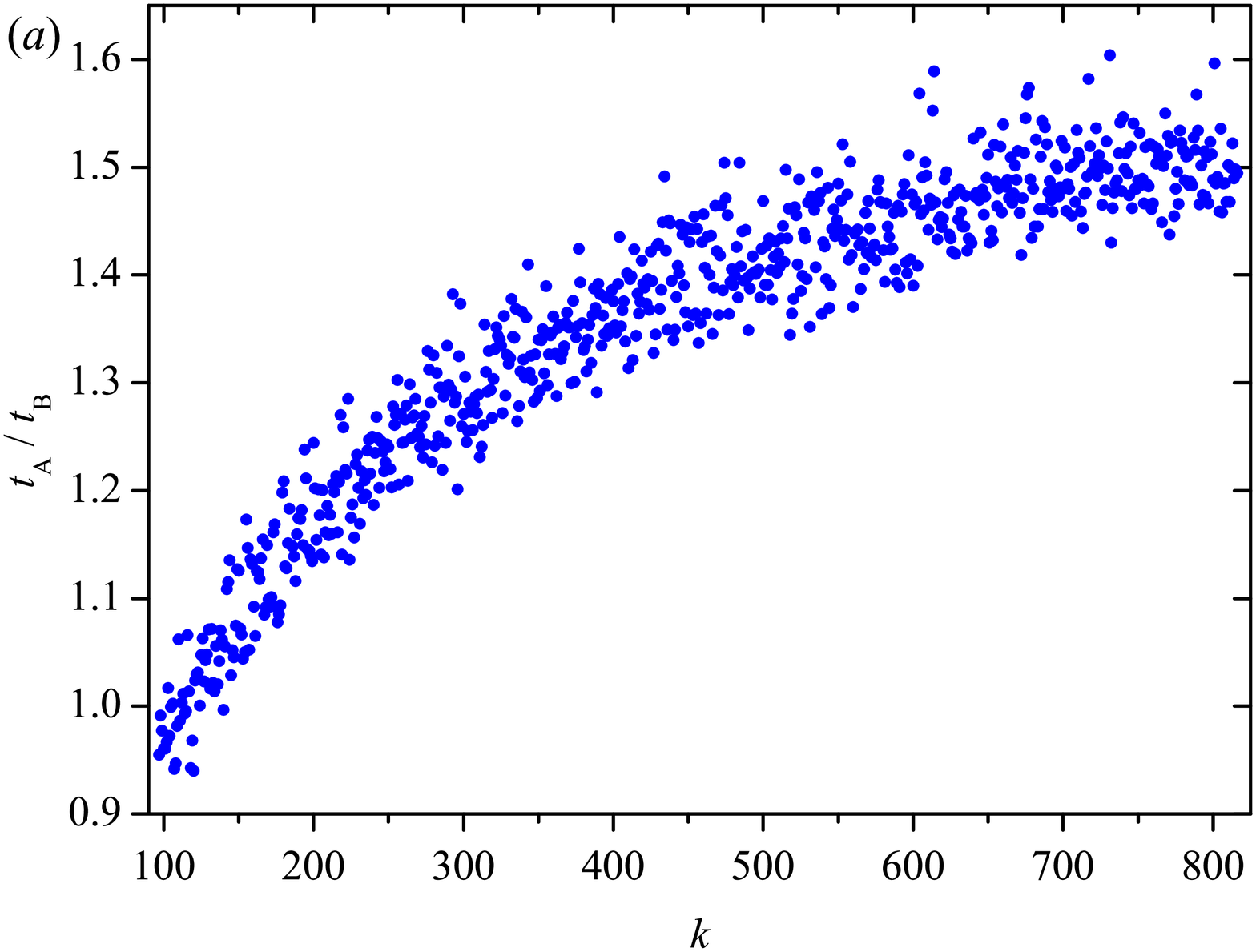}\\
\includegraphics[width=\columnwidth]{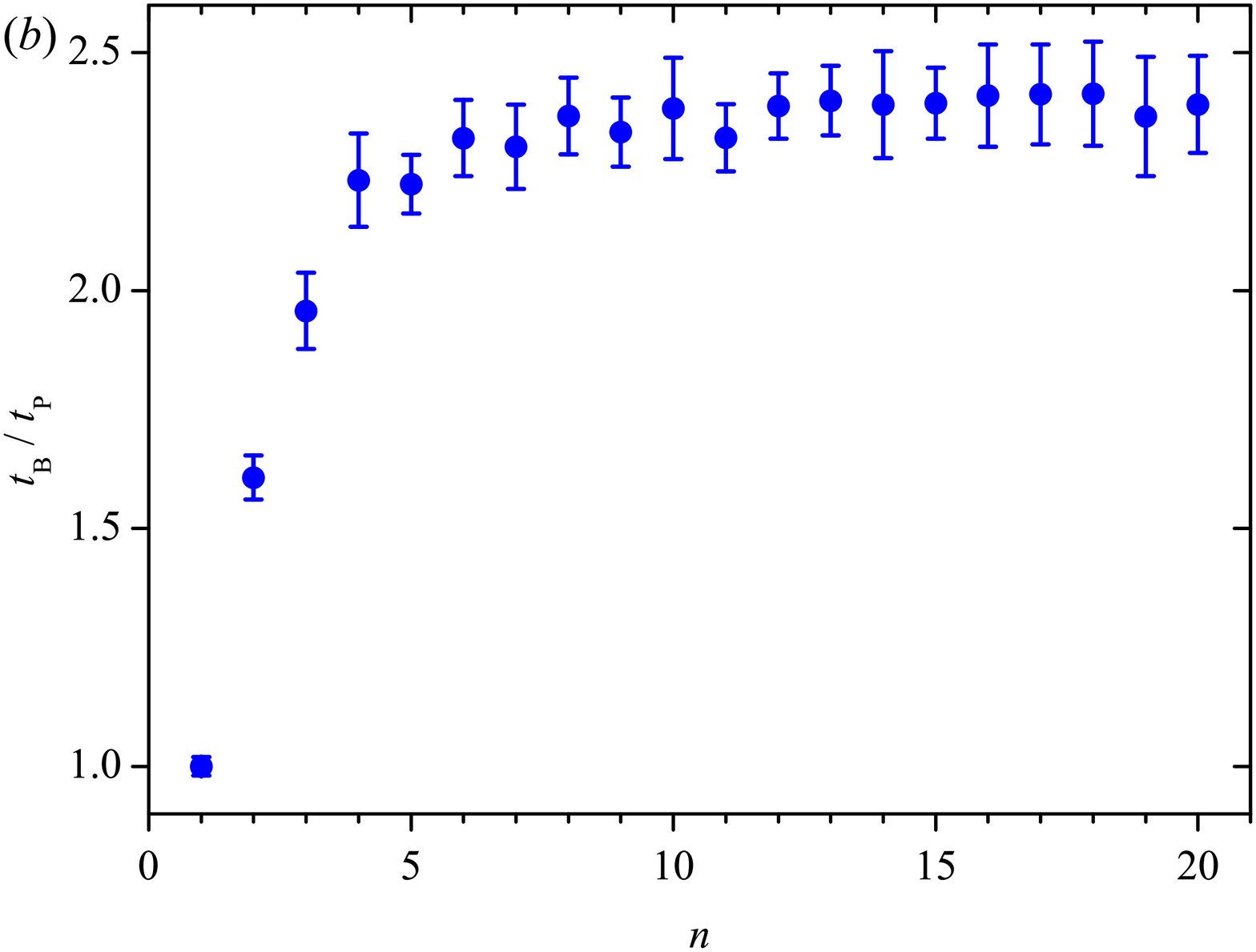}\\
\caption{(a) Dependence of $t_A/t_B$ on $k$, (b) Dependence of $t_B/t_P$ on $n$ for $k=512$. Here $t_A$ is running time of the sequential program without memory compression, $t_B$ is running time of the sequential program with memory compression, $t_P$ is running time of the parallel program with memory compression, which employs $n$ CPU cores.
%The dashed line corresponds to the fit $0.92587+0.00108k$.
\label{fig:speedup}}
\end{figure}

\subsection{Parallelization}

All simulations were performed on a computer with two Intel Xeon Platinum~8164, 2~GHz processors and 1536~GB RAM.
The speed of the algorithm is a matter of importance when it comes to modelling huge lattices.  This can be improved by using parallel programming with OpenMP.

One of the most time consuming parts is relabeling the clusters. It is obvious that different sets of clusters
can be merged simultaneously by parallel threads if they do not connect with each other.
To prevent different threads from processing the same clusters an additional shared array is used,
where each of the clusters can be marked as ``busy'' or ``free''.

To prevent intersections only one thread should be placing a $k$-mer at any time. This can be achieved using a critical section in OpenMP, which contains the following:
\begin{itemize}
\item check whether a $k$-mer can be placed, (otherwise the algorithm quits the critical section and randomly chooses other lattice site),
\item check that the surrounding clusters are ``free'', (otherwise the algorithm quits the critical section and waits for the surrounding clusters to be processed)
\item place the $k$-mer,
\item mark all surrounding clusters as ``busy''.
\end{itemize}

Prior to entering the critical section each thread makes sure that a $k$-mer can be placed, and that the surrounding clusters are not ``busy''. The thread waits if the latter condition does not hold. Verification of these conditions inside the critical section is necessary for the correctness of the program, while the same checkup outside the critical section is performed in order to minimize the number of entries to the critical section, because extra entries can significantly slow down the program. The clusters are marked ``free'' after all the necessary processes of merging the clusters and checking for percolation have been completed.

Figure~\ref{fig:speedup}(b) shows that the approach allows to speed up the program approximately by a factor of $2.4$. The speedup is limited to the condition that only one thread is placing a $k$-mer at any time, so the parallelization is efficient for a large number of CPU cores only in the final phase of the calculation, when a lot of cluster relabeling is being performed. Future improvements of the parallelization method may include dividing the lattice into several parts, and allowing several threads to place $k$-mers simultaneously into different parts of the lattice.

We have used the library RNGAVXLIB~\cite{Guskova2016CPC} for the efficient and safe generation of random numbers in parallel threads.

\section{Results and discussion\label{sec:results}}
\subsection{Percolating threshold, jamming coverage and their ratios}

The obtained percolation thresholds and jamming coverages and their ratios are shown in Fig.~\ref{fig:densities} and presented in Table~\ref{tab:values}. Known approximations~\eqref{eq:Vandewalle}, \eqref{eq:Kondrat}, \eqref{eq:Tarasevichpc}, and  \eqref{eq:Cornette} are also shown in Fig.~\ref{fig:densities} for comparison.
For $2^4 \le k \le 2^{17}$, the data can be fitted by
\begin{equation}\label{eq:fit}
  p_c = A + \frac{B}{ C + \sqrt{k} },
\end{equation}
where $A = 0.615 \pm 0.001$, $B = -2.26 \pm 0.09$, $C = 10.2 \pm 0.6$, the adjusted coefficient of determination is $R^2 =0.999$.
\begin{table}[H]
  \caption{Percolation thresholds $p_c$ and jamming coverages $p_j$  for different values of $k= 2^n$.\label{tab:values}}
 \begin{ruledtabular}
\begin{tabular}{ccc}
  $n$ & $p_c$ & $p_j$ \\
  \hline
0 & 0.5933 & 1\\
1 & 0.5611 & 0.9062\\
2 & 0.5023 & 0.8094\\
3 & 0.4717 & 0.7479\\
4 & 0.4590 & 0.7103\\
5 & 0.4733 & 0.6892\\
6 & 0.4894 & 0.6755\\
7 & 0.5111 & 0.6686\\
8 & 0.5292 & 0.6628\\
9 & 0.5450 & 0.6618\\
10 & 0.5628 & 0.6592\\
11 & 0.5740 & 0.6596\\
12 & 0.5850 & 0.6575\\
13 & 0.5913 & 0.6571\\
14 & 0.6003 & 0.6561\\
15 & 0.6039 & 0.6548\\
16 & 0.6086 & 0.6545\\
17 & 0.6067 & 0.6487\\
\end{tabular}
\end{ruledtabular}
\end{table}

\begin{figure}
\includegraphics[width=\columnwidth]{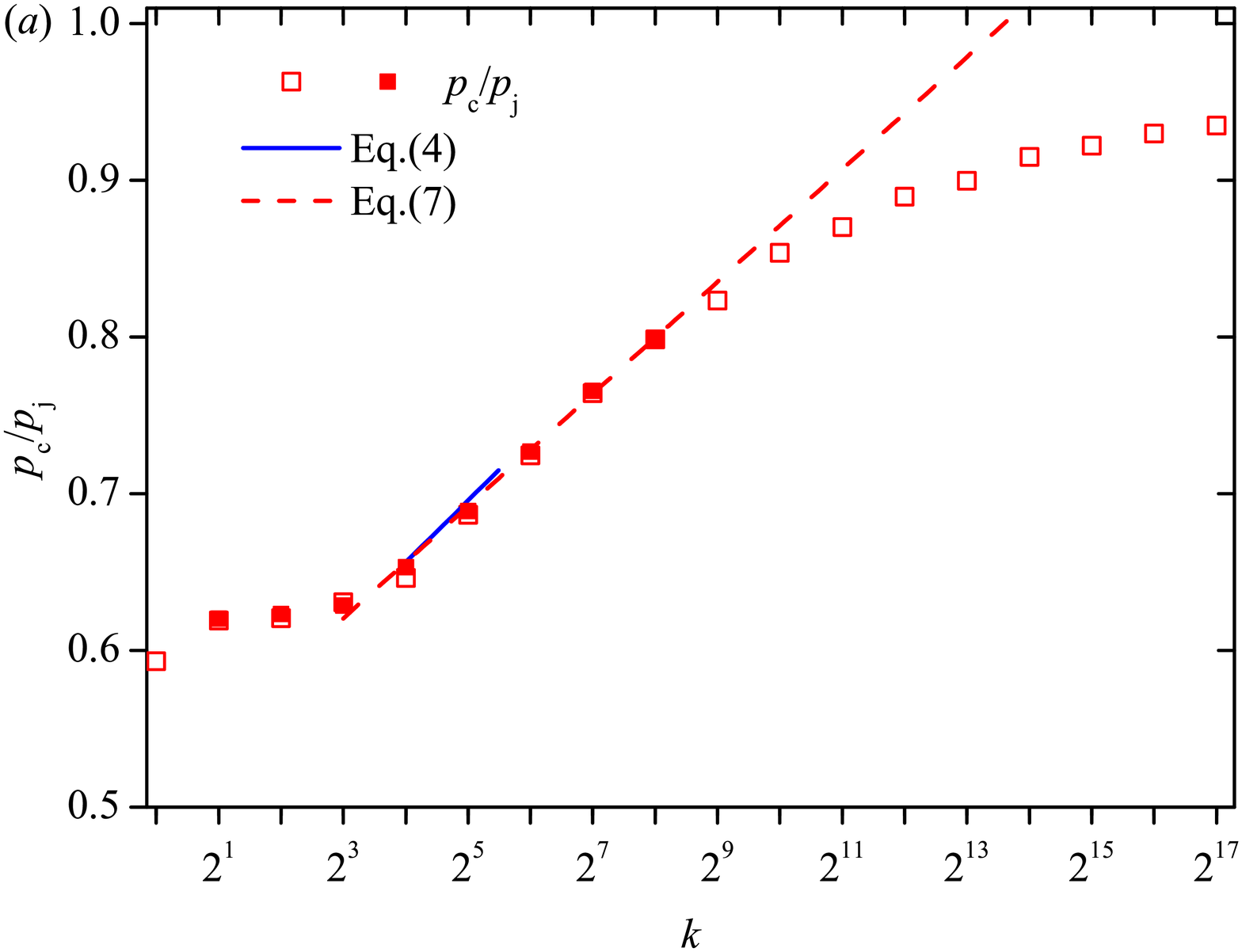}\\
\includegraphics[width=\columnwidth]{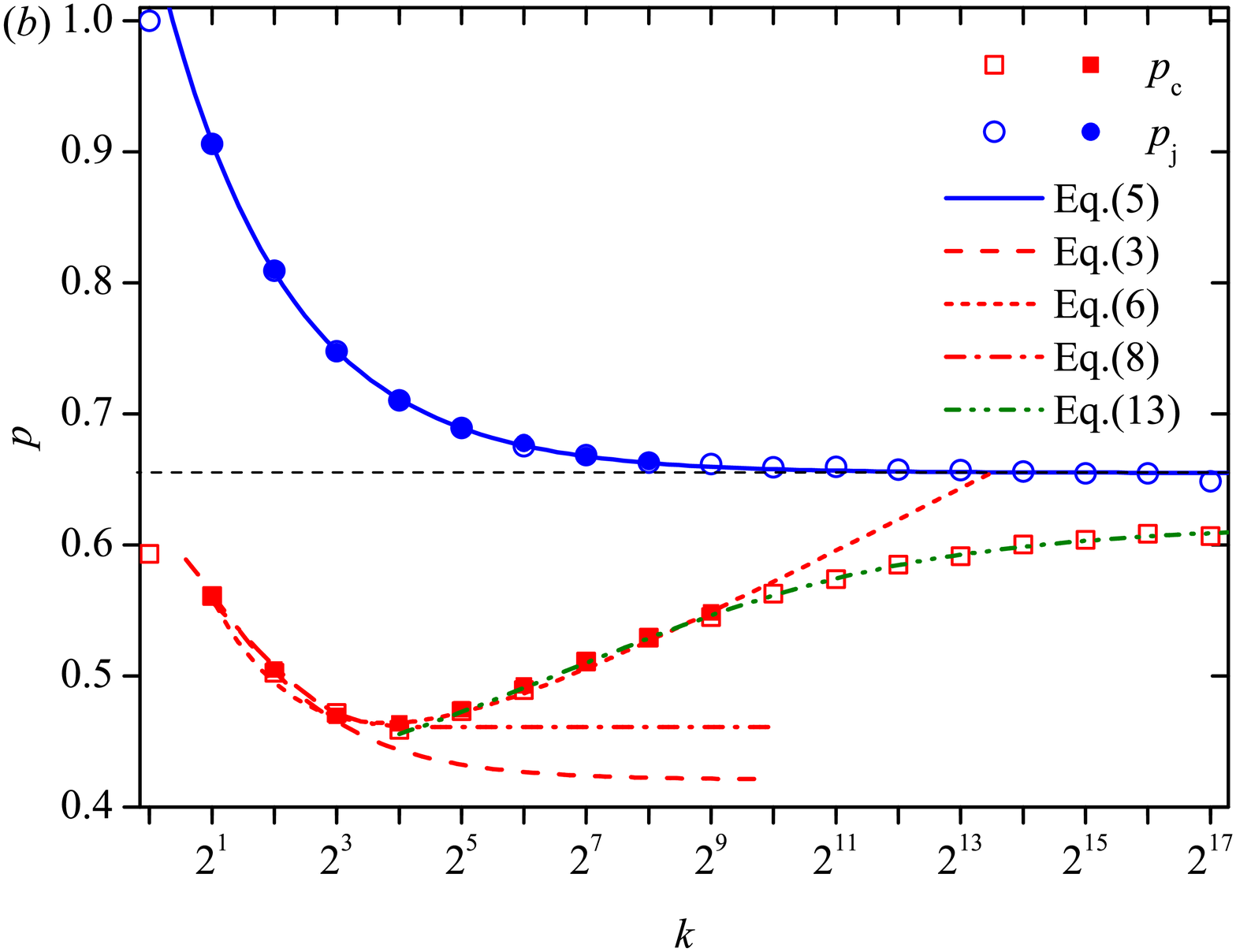}\\
\caption{(a) Dependence of $p_c/p_j$ on $k$. (b) Dependence of $p_c$ and $p_j$ on $k$.
Calculations were performed with $L=100k$ for $k\le 2^{14}$ and $L=1\,638\,400$ for $k>2^{14}$.
For $k\le 2^{14}$ a single calculation was performed, for $k>2^{14}$ two independent runs were carried out,
and the  mean values of $p_c$ and $p_j$ are shown. Previously known values of $p_c/p_j$ for $k\le 512$
taken from~\cite{Tarasevich2012PRE,Lebovka2011PRE} are shown as closed symbols. Parameters of the fitting curves are the same as indicated in the text.\label{fig:densities}}
\end{figure}

The results are in agreement with the previously known values of $p_c/p_j$ for $k\le 512$ taken from~\cite{Tarasevich2012PRE,Lebovka2011PRE}, which are shown as closed symbols. Figure~\ref{fig:densities} shows the new large $k$ regime of the behavior of the percolation threshold. In particular, the ratio $p_c/p_j$ shows very different behavior for $k\gtrsim 500$ compared to the behavior for $k\le 512$. The slow increase of the ratio for large values of $k$ compared to Eq.~\eqref{eq:Tarasevichpcpj} is in agreement with the results of~\cite{Kondrat2017PRE}. We have generalized the results of~\cite{Kondrat2017PRE} for the case of periodic boundary conditions in Appendix~\ref{sec:appendix} and have proved that, in thermodynamic limit, percolation always occurs before jamming.

\subsection{Estimations of the accuracy}

Figure~\ref{fig:fixk_p} shows the dependence of the standard deviations (STDs), $\sigma$, of $p_c$ and $p_j$ on $L/k$ for $k=256$, calculated over $100$ independent runs.
The STDs in Fig.~\ref{fig:fixk_p} decrease with $L$ approximately
as $\sigma(p_c)\propto L^{-3/4}$ and $\sigma(p_j) \propto L^{-1}$,
in accordance with Eqs.~\eqref{eq:pcscaling} and~\eqref{eq:pjscaling}.

The finite size effect is expected to be weak, except for the case when $k$ and $L$ are comparable~\cite{Manna1991JPA}. The STDs of the ratio $p_c/p_j$ is approximately $0.0051$ for $k=256$ and $L=100k$,
hence, a single calculation provides reasonably high accuracy for the case $L=100k$.
\begin{figure}[!htb]
\includegraphics[width=\columnwidth]{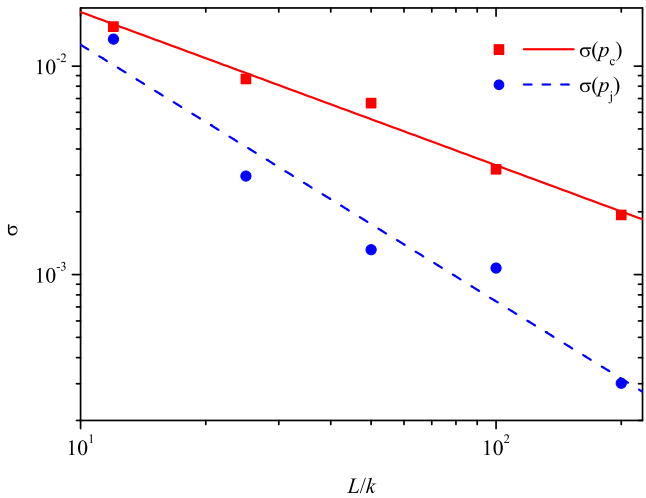}
\caption{STDs, $\sigma$, of $p_c$ and $p_j$ vs $L/k$ for $k=256$, estimated over $100$ independent runs.
The critical exponents are $0.74 \pm 0.05$ for the STD of the percolation threshold and $1.23 \pm 0.17$ for the STD of the jamming coverage.\label{fig:fixk_p}}
\end{figure}

Figures~\ref{fig:fixl_p} and~\ref{fig:fixldivk_p} show that the statistical error primarily depends on $L$, and
only weakly depends on $k$.
\begin{figure}[!htb]
\includegraphics[width=\columnwidth]{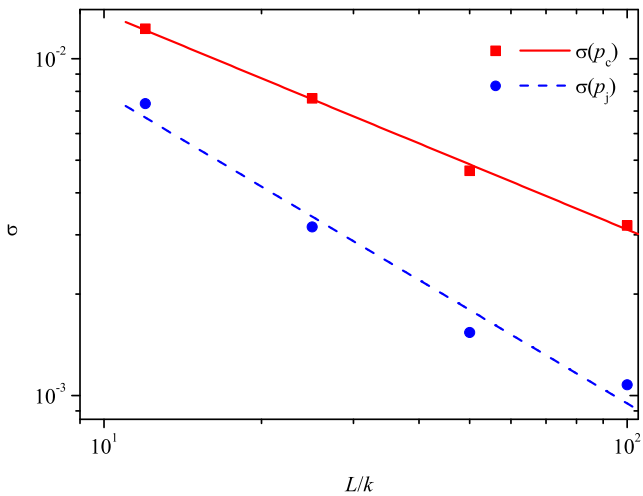}
\caption{STDs, $\sigma$, of $p_c$ and $p_j$ vs $L/k$ for $L=25600$, estimated over $200$ independent runs.
The critical exponents are $0.643 \pm 0.024$ for $p_c$ and $0.92 \pm 0.11$ for $p_j$.
\label{fig:fixl_p}}
\end{figure}
\begin{figure}[!htb]
\includegraphics[width=\columnwidth]{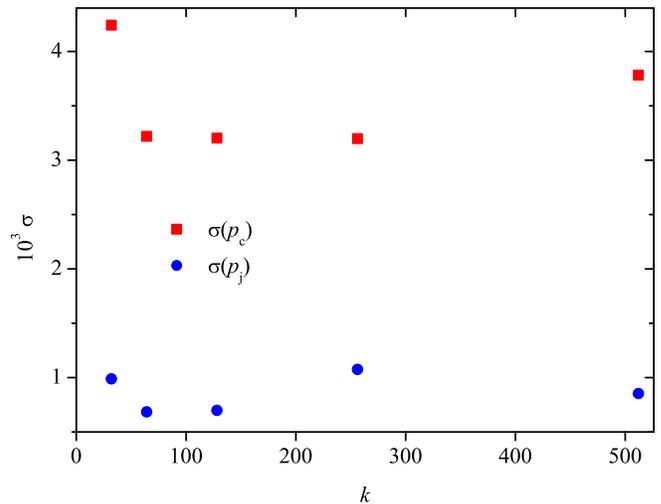}
\caption{Standard deviations of $p_c$ and $p_j$ vs $k$ for $L=100k$, estimated over $200$ independent runs.
\label{fig:fixldivk_p}}
\end{figure}

\subsection{Structure of the percolating and jamming states}

Figure~\ref{fig:clusterstat} shows the cluster size distribution at the percolating threshold for $k=256$, $L=100k$.
$n_s$ is the average number per site of clusters consisting of $s$ sites each.
The dashed line shows the function $s^{-187/91}$, so one can see that the distribution
behaves approximately as $n_s\propto s^{-187/91}$  in accordance with Eq.~\eqref{eq:CSD}.
\begin{figure}[tb]
\includegraphics[width=\columnwidth]{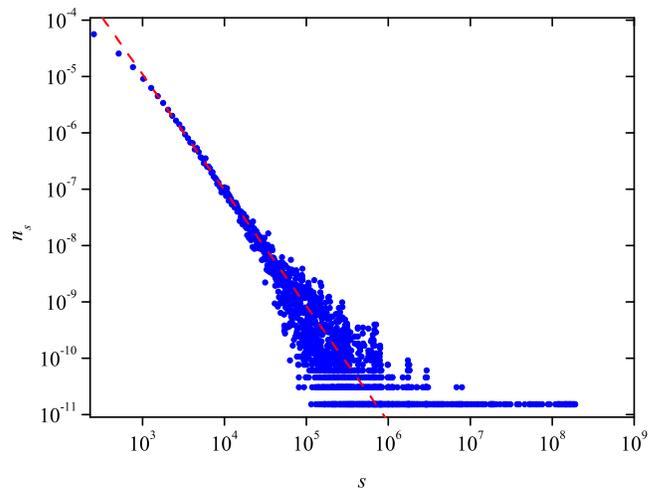}
\caption{Cluster size distribution for $k=256$, $L=100k$, at the percolation threshold, averaged over 100 independent runs. $n_s$ is the number of clusters consisting of $s$ sites divided by the total number of sites, $L^2$. The distribution behaves approximately as $n_s\propto s^{-187/91}$ (dashed line).\label{fig:clusterstat}}
\end{figure}

We found that, in practice, the jamming state always contains only a single cluster. However, it is theoretically possible to construct a jamming state with several clusters. An example is shown in Fig.~\ref{jamming2clusters} for $k=4$ and $L=10k$.
\begin{figure}[!htb]
\includegraphics[width=0.5\columnwidth]{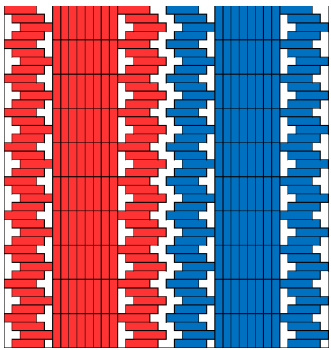}
\caption{Example of a jamming state with two clusters for a square lattice with $k=4$, $L=10k$
and periodic boundary conditions. Courtesy of R.\,K.\,Akhunzhanov.}
\label{jamming2clusters}
\end{figure}

Figure~\ref{fig:contacts} shows that the relative number of interspecific contacts decreases with $k$
as $n^\ast_{xy} \propto k^{-1.82}$. The exponent $-1.82$ is larger than $-2$. For a configuration with an average stack size $k\times k$ and a ragged interface between stacks, this means that the stacks are connected with each other providing percolation. Also, Fig.~\ref{fig:contacts} demonstrates a significant finite size effect for $k \gtrapprox 2^8$.
The standard deviation of the obtained values of $n^*_{xy}$ estimated for $k=2^8$ and $L=100k$ over $200$ independent runs is $\sigma(n^*_{xy}) \approx 3.1\times 10^{-6} \approx 0.014 n^*_{xy}$.
\begin{figure}[!htb]
  \centering
  \includegraphics[width=\columnwidth]{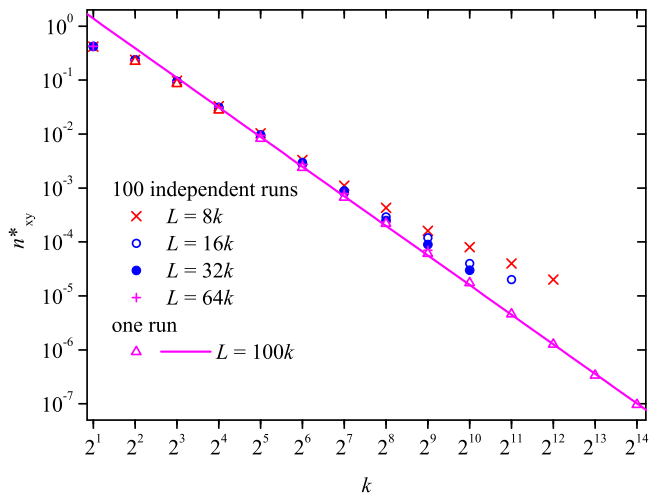}
  \caption{Relative number of interspecific contacts, $n_{xy}^\ast$, vs $k$ at jamming. Solid line corresponds to the  fit $ n^\ast_{xy} \propto k^{-1.82}$.\label{fig:contacts}}
\end{figure}

Figure~\ref{fig:percjam} demonstrates the stack structure of a wrapping cluster. The stacks of mutually perpendicular orientations are connected according to the behavior of the relative number of interspecific contacts.  The edges of the clusters are ``ragged''. Hence, the stacks look like clouds with a typical size of $k \times k$ and have rather diffuse edges, not like clear-cut squares. Internal regions of the stacks contain holes even at jamming.
\begin{figure}[!htb]
  \centering
  \includegraphics[width=\columnwidth,clip=true]{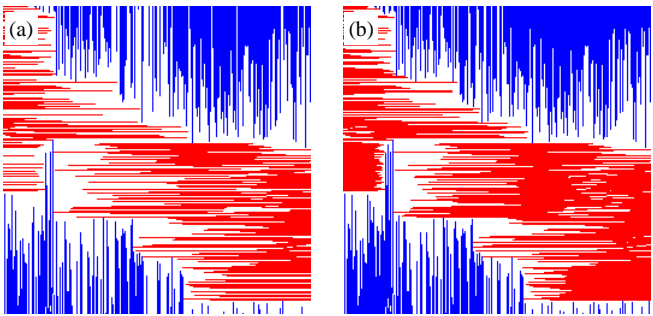}
  \caption{Fragment ($256 \times 256$) of a wrapping cluster at the percolation threshold (a) and the same fragment at jamming (b). $k=128$, $L= 100k$.\label{fig:percjam}}
\end{figure}

\section{Conclusion\label{sec:conc}}

We have studied the behavior of percolation and jamming thresholds for the isotropic random sequential adsorption of large linear $k$-mers onto a square lattice. We have presented a parallel algorithm which is very efficient in terms of speed and memory usage. Our results are in agreement with the previously obtained results  of~\cite{Tarasevich2012PRE,Lebovka2011PRE,Kondrat2017PRE} and we have obtained the percolation and jamming concentrations for lengths of $k$-mer up to $2^{17}$. The ratios of percolation and jamming densities show quite different behavior for $k \gtrsim 500$ compared to the behavior for $k \le 512$. We are not aware of any other studies of this problem that have considered values of $k$ larger than $512$ except~\cite{Kondrat2001PRE} where the values of $k$ were up to 2000 and the lattice size was only $L = 2500$. We have also analyzed the structure of the percolating and jamming states in terms of the cluster size distribution and the relative number of interspecific contacts.

\acknowledgments
The simulations were performed on the HPC facilities of the Science Center in Chernogolovka. We acknowledge the funding from the Basic Research Program of the National Research University Higher School of Economics (M.G.S, L.Y.B.) and the Ministry of Education and Science of the Russian Federation, Project No.~3.959.2017/4.6 (Y.Y.T.). The authors would like to thank Grzegorz Kondrat for his stimulating discussions.

\appendix

\section{Jammed systems and periodic boundary conditions\label{sec:appendix}}

\begin{proposition}
Each cluster in every jammed configuration of fixed-length nonoverlapping horizontal or vertical needles
on a finite square lattice with periodic boundary conditions is a percolating cluster.
\end{proposition}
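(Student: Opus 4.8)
The plan is to adapt the argument of Kondrat, Koza and Brzeski~\cite{Kondrat2017PRE} to the torus by passing to the universal cover $\mathbb{Z}^2$ of the periodic $L\times L$ lattice. I use two elementary observations. First, a configuration is jammed precisely when every row and every column, read cyclically, contains no run of $k$ consecutive empty cells. Second, two cells belonging to different clusters are never lattice-adjacent, so every cell that is adjacent to a cluster $C$ but does not belong to $C$ is empty; that is, the boundary layer of any cluster consists entirely of empty cells.

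Suppose, for contradiction, that a cluster $C$ is not percolating, i.e.\ it wraps around the torus in neither direction. Then $C$ carries no non-contractible cycle, so the preimage of $C$ under the covering map $\mathbb{Z}^2\to(\mathbb{Z}/L\mathbb{Z})^2$ splits into finite connected components; fix one, $\widehat C$. It is a nonempty, finite, non-overlapping union of lifted $k$-mers in the plane, and, $\widehat C$ being an entire component of the preimage, every cell of $\mathbb{Z}^2$ adjacent to $\widehat C$ but not in $\widehat C$ --- and likewise every cell enclosed by $\widehat C$ but not occupied by it --- projects to an empty torus cell (it cannot project into $C$, or it would lie in $\widehat C$, and it cannot project into another cluster, which would then be adjacent to $C$). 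It therefore suffices to exhibit $k$ consecutive collinear cells of $\mathbb{Z}^2$ each of which is adjacent to $\widehat C$ (whether from the unbounded side or from inside a hole of $\widehat C$), for this string then projects to an empty $k$-window, contradicting jamming. As an immediate consequence, a percolating cluster is present already at the jamming point, so percolation cannot occur later than jamming.

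The crux is thus the planar statement that a finite non-overlapping union of $k$-mers always \emph{exposes a flat side of length} $k$: there is a $k$-mer $P$, disjoint from $\widehat C$, each of whose $k$ cells is lattice-adjacent to $\widehat C$. The natural route is an extremal argument --- look at the topmost row of $\widehat C$; if a horizontal $k$-mer lies in it, the $k$-mer directly above it is the desired $P$; otherwise that row is met only by the top ends of vertical $k$-mers, and one passes to an extremal such $k$-mer and tests the $k$-mer flanking it, using the non-overlapping condition to control the $k$-mers that could block it, iterating when it is partially shaded by a parallel $k$-mer at a different height; when no such $P$ protrudes into the unbounded complement, $\widehat C$ must enclose a region and one argues instead with the loop of $k$-mers surrounding that region, where a ``turn'' of the loop necessarily exposes a flat $k$-side facing either outward or into the hole (which is equally good, by the previous paragraph). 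The step I expect to be the main obstacle is precisely this combinatorial bookkeeping: choosing the correct extremal $k$-mer in the presence of shading, and organizing the descent into nested enclosed regions so that it terminates (it does, the configuration being finite and the regions strictly shrinking). Superimposed on this is the feature peculiar to periodic boundary conditions --- one must keep careful track of which cells are \emph{guaranteed} empty on the torus rather than merely lying outside $\widehat C$ in the plane, where they could still be covered by another lift of $C$ or by a different cluster --- and this is where the argument must go beyond the rigid-boundary proof of~\cite{Kondrat2017PRE}.
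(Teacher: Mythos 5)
Your reduction to a planar problem is sound as far as it goes: if $C$ wraps in neither direction, its lifts to $\mathbb{Z}^2$ are finite, and your observation that any cell of $\mathbb{Z}^2$ adjacent to a full lift component $\widehat C$ but not in it must project to an \emph{empty} torus cell is correct and is exactly the right way to handle the periodic identifications. (Your parallel claim about cells merely \emph{enclosed} by $\widehat C$ is false as stated --- a hole of $\widehat C$ can contain a lift of a different cluster that does not touch $\widehat C$ --- but you do not actually use it, since you only invoke cells adjacent to $\widehat C$.) The genuine gap is that the entire proof then rests on the planar ``exposed flat side'' lemma --- that a finite connected non-overlapping union of $k$-mers always admits $k$ consecutive collinear cells, each outside $\widehat C$ and each adjacent to it --- and you do not prove it; you sketch an extremal strategy and explicitly flag the shading/termination bookkeeping as the expected obstacle. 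That bookkeeping is not a formality: the naive candidates fail (the flank of the leftmost vertical $k$-mer in the top row can be blocked by a parallel $k$-mer one column over at a lower height, and the window above that blocker then contains cells not adjacent to $\widehat C$ at all, so it is not guaranteed empty on the torus), and the descent into nested holes is only gestured at. Until this lemma is established, there is no proof.

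The paper takes a different and much shorter route that avoids this combinatorial work entirely: it cites the Lemma of Kondrat, Koza, and Brzeski (every cluster in a jammed configuration reaches at least one of any two consecutive edges of an $L\times L$ window), notes that its proof applies verbatim under periodic boundary conditions, and then exploits translation invariance of the torus --- sliding the window column by column (or row by row) and using the Lemma at each position, concluding by a pigeonhole/repetition argument that the cluster must wrap. In effect, the hard jamming-versus-geometry content that you are trying to reprove from scratch on the universal cover is exactly what that Lemma encapsulates. If you want to keep your covering-space framework (which is an attractive way to make ``wrapping'' precise), the economical fix is to import that Lemma rather than to re-derive an exposed-flat-side statement, or else to supply a complete proof of your planar lemma, including the case where $\widehat C$ surrounds holes.
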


\begin{proof}
Consider the sides of an arbitrary square of size $L$ as the edges of the system, where $L\times L$ is the lattice size. It has previously been proved, for rigid boundary conditions, that every cluster in a jammed configuration
extends to one of two consecutive edges of the system (see the Lemma in Method~I in~\cite{Kondrat2017PRE}).
This statement holds also for periodic boundary conditions, because its proof in~\cite{Kondrat2017PRE}
is directly applicable for this case.
We will label the system edges using geographical notation
($N$, $E$, $S$, and $W$ for the top, right, bottom, and left edge, respectively).

There are two cases: either the cluster extends to all four edges of the system, or it does not. In the latter case, it does not touch at least one system edge, say, $N$. Then it follows from the Lemma that it extends to the two edges adjacent to $N$, that is, to $E$ and $W$. Let us now shift the edges to the left by one column, while keeping the configuration unchanged. Then the cluster still does not touch the edge $N$, so it must still touch the new edge $E$ in at least one lattice site. Proceeding with such shifts to the left, each time we find, in the leftmost column, at least one lattice site, that belongs to the cluster. We proceed with such shifts until the same lattice site is found twice during this process. This will happen eventually, because there are only a finite number of lattice sites. On the other hand, finding the same site twice during this process means that the cluster wraps all the way around the lattice, hence it is a percolating cluster.

In the former case, when the cluster touches all four edges of the system, we shift the edges upward by one row.
We proceed with such shifts until the cluster does not touch the edge $N$. If the process completes and the cluster
does not touch the edge $N$, then such a situation was considered in the previous paragraph, so it is a percolating cluster. If the process does not complete during $L$ shifts, then we find at each step a lattice site where the cluster touches the edge $N$. We proceed with such shifts until the same lattice site is found twice during this process. This will happen eventually, because there are only a finite number of lattice sites. On the other hand, finding the same site twice during this process means that the cluster wraps all the way around the lattice, hence it is a percolating cluster.
\end{proof}

\begin{consequence}
In a system of fixed-length non-overlapping horizontal or vertical needles on a square lattice with periodic boundary conditions (on a torus), in the thermodynamic limit, percolation always occurs before jamming, $p_c < p_j$.
\end{consequence}

\begin{proof}
Pinson proved~\cite{Pinson1994JSP} that the probability, $R$, of finding a wrapping cluster on a torus at the percolation threshold is $R(p_c) = R^\ast$, where $0 < R^\ast < 1$. The value of $R^\ast$ depends on the way how the wrapping cluster is found. In particular, in the thermodynamic limit, the probability that there exists a cluster that wraps around the boundary conditions in the horizontal direction is $R^\ast = 0.521\,058\,290$~\cite{Newman2000PRL,Newman2001PRE}. For $L \to \infty$,
\begin{equation}\label{eq:R}
  R =
  \begin{cases}
    0, & \text{if } p<p_c, \\
    R^\ast, & \text{if } p=p_c, \\
    1, & \text{if } p>p_c.
  \end{cases}
\end{equation}
Since $R(p_j) = 1$ therefore $p_j >p_c$.

\end{proof}

\bibliography{percolation}

\end{document}